\documentclass[%
 reprint,
superscriptaddress,
 amsmath,amssymb,
 aps,
]{revtex4-2}

\usepackage{amsmath}
\usepackage {xcolor}
\usepackage{comment}
\usepackage{bbm}
\usepackage{mathrsfs}
\usepackage{comment}
\usepackage{amssymb}
\usepackage{graphicx}
\usepackage{subfigure}
\usepackage[colorlinks,
            linkcolor=blue,
            anchorcolor=blue,
            citecolor=blue]{hyperref}

\newtheorem{proposition}{Proposition}

\allowdisplaybreaks[4]

\def\ra{\rangle}
\def\la{\langle}

\usepackage{graphicx}
\usepackage{dcolumn}
\usepackage{bm}

\begin{document}

\title{Quantum average correlation based on average coherence}

\author{Xiaoyu Ma}

\affiliation{College of Science, National University of Defense Technology, Changsha 410073, China}

\author{Qing-Hua Zhang}
\email[]{qhzhang@csust.edu.cn}
\affiliation{School of Mathematics and Statistics, Changsha University of Science and Technology, Changsha 410114, China}
\affiliation{Hunan Provincial Key Laboratory of Mathematical Modeling and Analysis in Engineering, Changsha University of Science and Technology, Changsha 410114, China}

\author{Cong Xu}
\affiliation{School of Mathematical Sciences, Capital Normal University,
Beijing 100048, China}

\thanks{}

\begin{abstract}
This paper studies the quantification and structural properties of quantum average correlation based on average coherence. Motivated by two mathematically equivalent approaches to define average coherence: one by averaging over complete sets of mutually unbiased bases, and the other by integrating over all orthogonal bases under the Haar measure, we define an average correlation for bipartite systems as the difference between global and local skew information. This correlation measure is shown to satisfy essential properties including non negativity, contractivity under local quantum channels, and local unitary invariance. We further prove the equivalence between the average correlation defined via mutually unbiased bases and that defined via unitary groups. Finally, we derive a complementarity relation that connects wave-particle duality with the average correlation between a system and its environment. 
\end{abstract}

\maketitle

\textbf{Keywords: average correlation; average coherence; mutually unbiased bases; Wigner-Yanase skew information} 

\textbf{PACS: 03.67.-a}

\section{Introduction}

Quantum coherence originates from the superposition principle of quantum mechanics, serving as the core of many quantum phenomena and a key feature distinguishing the quantum world from the classical one.
The quantification of quantum coherence, first established in Ref. \cite{PhysRevLett.113.140401}, laid the foundation for its treatment as a quantum resource. Subsequent work introduced a variety of coherence measures \cite{Rana2016,PhysRevLett.115.020403, PhysRevLett.113.170401} and explored its manipulation via distillation and conversion protocols \cite{chitambar2016assisted}. A comprehensive resource-theoretic framework for coherence is reviewed in Ref.~\cite{streltsov2017colloquium}. Within the framework of resource theory for coherence, a variety of coherence quantifiers have been successively proposed. These include the $l_p$-norm of coherence \cite{PhysRevLett.113.140401}, coherence measures based on quantum skew information \cite{PhysRevA.95.042337,sun2021quantifying}, fidelity \cite{Shao2015}, trace distance \cite{Rana2016,Yu2016,Wang2016}, geometric coherence \cite{zhang2017estimation}, and relative entropy of coherence \cite{Zhao2018,PhysRevA.93.032136,shao2017quantum,Chitambar2016}, among others \cite{PhysRevLett.113.140401}. Quantum coherence serves as a resource that makes many physical tasks possible \cite{Korzekwa_2016,PhysRevLett.129.130602,PhysRevLett.125.180603,PhysRevA.99.022340,PhysRevA.97.062342,PhysRevLett.119.150405,PhysRevLett.116.150502,PhysRevLett.116.120404,PhysRevA.98.032324}.

More recently, research has focused on the fundamental relations between coherence and other nonclassical features, such as quantum correlations and complementarity \cite{wdxh-nwsw,PhysRevA.110.042413,wwjf-lh44,fan2023average,Sun2022QuantifyingCoherence,PhysRevA.85.032117,fan2025average,Ma2016,sun2017quantum,PhysRevA.111.052451,PhysRevA.109.052439,PhysRevA.110.042413,PhysRevLett.116.160406,Zhang_2024,jin2021maximum}. In particular, Luo et al. proposed a measure for correlations in terms of the skew information \cite{PhysRevA.85.032117}. Luo and Sun proposed the average and maximal coherence based on Wigner-Yanase skew information to quantity the quantum correlation \cite{LUO20192869,sun2017quantum}. Streltsov et al. established a fundamental trade-off between the $l_1$-norm coherence measure and linear entropy with respect to a fixed reference basis \cite{PhysRevLett.115.020403}. Subsequent research has generalized this relation to the relative entropy of coherence and alternative mixedness measures such as the von Neumann entropy \cite{PhysRevA.92.042101,sun2022complementary,CHE2023106794,HU20181,PhysRevA.93.032136,PhysRevA.93.062111}.

According to the fact that the Wigner-Yanase skew information of the global state is greater than that of local state, it is naturally to capture the correlation by the difference between these two skew information. Given the basis-dependent nature of quantum correlation measures, a natural question concerns whether quantum correlations based on the Wigner-Yanase skew information can be characterized in an intrinsic, basis-independent way that depends only on the quantum state. Fortuitously, quantities such as the minimal coherence and maximal coherence allow for basis-independent formulations, achievable by optimizing over all bases \cite{PhysRevA.92.042101,LUO20192869}. Motivated by this, we aim to construct a correlation measure based on the average coherence, which provides an intrinsic characterization of a quantum state.

In this work, we develop a quantifier for quantum correlations rooted in the notion of average coherence. The paper is structured as follows. Section \ref{sec2} introduces two complementary formulations of average coherence, defined respectively via mutually unbiased bases and unitary integration over all orthogonal bases. Building on these definitions, Section \ref{sec3} constructs several correlation measures that satisfy key properties such as non-negativity, contractivity, and local unitary invariance, and establishes their equivalence under appropriate normalizations. Section \ref{sec3.2} derives a complementarity relation linking wave-particle duality with the average correlation between a system and its environment. Finally, Section \ref{sec4} summarizes the results and outlines potential extensions.

\section{Average coherence} 	\label{sec2}
Let $\rho$ be a quantum state (density operator) on a $d$-dimensional system Hilbert space and $\Pi=\{|i\rangle\langle i|: i=1,2, \cdots, d\}$ be a von Neumann measurement, $i.e.$, $\{|i\rangle: i=1,2, \cdots, d\}$ constitutes an orthonormal basis. The measure of coherence based on the Wigner-Yanase skew information is 
$$
C(\rho |\Pi)=\sum_{i=1}^d I(\rho,|i\rangle\langle i|),
$$
where  $I(\rho, O)=-\frac{1}{2} \mathbf{tr}[\sqrt{\rho}, O]^2$, $O$ is an observable, and $[\cdot, \cdot]$ denotes commutator between two operators \cite{Wigner1963INFORMATION,PhysRevA.95.042337,PhysRevA.96.022130,PhysRevA.98.012113}. The coherence measure $C(\rho |\Pi)$ satisfies several key properties that define a physically meaningful coherence quantifier, including convexity, monotonicity under incoherent operations \cite{PhysRevA.96.022130,PhysRevA.98.012113}.

The quantification of average coherence can be approached via two distinct methodologies, each corresponding to a different sampling strategy over measurement bases. The first approach involves averaging over all mutually unbiased bases (MUBs), a set of bases in which any two bases are mutually unbiased, that is, measurements in one basis give completely random outcomes with respect to the other. Recall that two orthonormal bases $\Pi_1=\left\{\left|b_{1 i}\right\rangle: i=1,2, \cdots, d\right\}$ and $\Pi_2=\left\{\left|b_{2 i}\right\rangle: i=1,2, \cdots, d\right\}$ of a \(d\)-dimensional Hilbert space are mutually unbiased if
$\left|\left\langle b_{1 j} | b_{2 k}\right\rangle\right|=1 / \sqrt{d}, \quad$ for all $j, k$.
When the dimension $d$ is a prime power ($i.e.$, $d=p^k$ for a prime number $p$ and a positive integer $k$ ), there exists a complete set of $d+1$ MUBs $\Pi_t=\left\{\left| b_{t i}\right\rangle: j=1,2, \cdots, d\right\}, t=1,2, \cdots, d+1$. Define the average coherence of $\rho$ with respect to a complete set of $\Pi_t$ \cite{LUO20192869,fan2023average}
$$
C_{{mub}}(\rho)=\frac{1}{d+1} \sum_{t=1}^{d+1} C\left(\rho | \Pi_t\right).
$$
The well-defined average coherence is actually independent of the choice of complete sets of MUBs. Averaging coherence over such a set captures a kind of intrinsic quantum randomness across complementary observables, and is closely related to notions of quantum uncertainty and information complementarity.

The second, broader approach consists of averaging coherence over all possible orthogonal bases $\Pi$ (equivalently, over the entire unitary group with respect to the Haar measure). That is
$$
C_{\mathcal{U}}(\rho)=\int_{\mathcal{U}} C\left(\rho | U \Pi U^{\dagger}\right) d U
$$
where $d U$ denotes the normalized Haar measure on the unitary group $\mathcal{U}$, $U \Pi U^{\dagger}=\left\{U|i\rangle\langle i| U^{\dagger}\right\}$ for $\Pi=\{|i\ra\la i|,\ i=1,2,\cdots, d\}$ \cite{LUO20192869,fan2023average}. This corresponds to a fully basis-independent notion of average coherence, reflecting the typical coherence content of a quantum state when subjected to arbitrary orthogonal measurements. This measure is naturally invariant under unitary transformations and can be linked to the geometric structure of the state space and resource-theoretic averages.

Interestingly, although there are two distinct approaches to define average coherence: one based on averaging over all MUBs, and the other over all orthogonal bases. The two formulations ultimately yield the same quantitative result \cite{LUO20192869,fan2023average}
\begin{equation}\label{mubeqave}
C_{{mub}}(\rho)=C_{\mathcal{U}}(\rho)=\frac{d-(\mathbf{tr} \sqrt{\rho})^2}{d+1} .
\end{equation}

\section{Quantum average correlation based on coherence}\label{sec3}
Let us consider a bipartite state $\rho^{A B}$ of the composite system $H^A \otimes H^B$ with $\mathbf{dim} H^A=d_A$ and $\mathbf{dim} H^B=d_B$. The global information content of $\rho^{AB}$ with respect to the local measurement $\Pi=\{|i\rangle\langle i|: i=1,2, \cdots, d\}$ on $H^A$ is defined as \cite{sun2017quantum,PhysRevA.96.022130,PhysRevA.105.032436}
$$
C\left(\rho^{AB}|\Pi\otimes \mathbb{I}_{B}\right)=\sum_{i=1}^{d_A} I(\rho^{AB},|i\rangle\langle i|\otimes \mathbf{1}^{B}),
$$
which can be regarded as the measure of partial coherence in $H^A \otimes H^B$. Here $\mathbb{I}_{B}$ is the identity operator on $H^B$ and $\mathbf{1}^{B}$ is the corresponding identity matrix. Following the fact that
$$I(\rho^{AB},|i\rangle\langle i|\otimes \mathbf{1}^{B}) \geqslant I(\rho^A,|i\rangle\langle i|),$$
where $\rho^A$ is the reduced state by tracing the part $B$.
Then the correlation of $\rho^{AB}$ can be formula as \cite{Sun2022QuantifyingCoherence,PhysRevA.85.032117,sun2017quantum,PhysRevA.110.022418}
$$
Q\left(\rho^{AB}|\Pi\right)=C\left(\rho^{AB}|\Pi\otimes \mathbb{I}_{B}\right)-C\left(\rho^{A}\otimes \rho^{B}|\Pi\otimes \mathbb{I}_{B}\right).
$$
It can also be equivalently expressed as 
$$Q\left(\rho^{AB}|\Pi\right)=C\left(\rho^{AB}|\Pi\otimes \mathbb{I}_{B}\right)-C\left(\rho^{A}|\Pi\right).$$
The quantity of correlation can be interpreted as the difference between the global skew information and local skew information.

Motivated by definitions of average coherence, we also adopt two distinct approaches to define average correlation. One is based on averaging over all MUBs $\Pi_t=\left\{\left| b_{t i}\right\rangle: j=1,2, \cdots, d_A\right\}, t=1,2, \cdots, d_A+1$, 
\begin{align*}
Q_{{mub}}\left(\rho^{AB}\right)
&=\frac{1}{d_A+1} \sum_{t=1}^{d_A+1} Q\left(\rho^{AB}|\Pi_t\right)\\
&=\frac{1}{d_A+1} \sum_{t=1}^{d_A+1} \left(C\left(\rho^{AB}|\Pi_t\otimes \mathbb{I}_{B}\right)- C\left(\rho^{A}|\Pi_t\right)\right)\\
&=\frac{1}{d_A+1} \sum_{t=1}^{d_A+1}\sum_{i=1}^{d_A} I\left(\rho^{AB}, | b_{t i}\rangle \langle b_{t i}|\otimes \mathbf{1}^{B}\right)\\
&\ \quad -\frac{1}{d_A+1} \sum_{t=1}^{d_A+1}\sum_{i=1}^{d_A} I\left(\rho^{A},| b_{t i}\rangle \langle b_{t i}| \right).
\end{align*}

\begin{proposition}
The quantifier $Q_{{mub}}\left(\rho^{AB}\right)$ has the following properties:

$\mathrm{(a)}$ (Non-negativity) $Q_{{mub}}\left(\rho^{AB}\right)\geqslant 0$ and the equality holds if $\rho^{AB}=\rho^A\otimes \rho^B$ is a product state.

$\mathrm{(b)}$ (Contractivity) For any quantum channel $\Phi$ acting on party $B$,
$$
Q_{{mub}}\left((\mathcal{I}_A\otimes \Phi )(\rho^{AB})\right)\leqslant Q_{{mub}}\left(\rho^{AB}\right),
$$
where $\mathcal{I}_A$ denotes the identity channel on party $A$.

$\mathrm{(c)}$  (Local unitary invariance) Let $\sqrt{\rho^{AB}}=\sum_{uv}X_{uv}\otimes |u\ra \la v|$ with $|u\ra$ the computational basis of $H^B$. For any local unitary operators $U_A$ and $U_B$ on parties $A$ and $B$, respectively, 
$$
Q_{{mub}}\left(U_A\otimes U_B \rho^{AB}(U_A\otimes U_B)^\dagger \right)= Q_{{mub}}\left(\rho^{AB}\right),
$$
and 
\begin{equation}\label{prop1}
Q_{{mub}}\left(\rho^{AB}\right)= \frac{\left(\mathbf{tr}\sqrt{\rho^{A}}\right)^2-\mathbf{tr}_B\left(\mathbf{tr}_A \sqrt{\rho^{AB}}\right)^2}{d_A+1}.
\end{equation}
\end{proposition}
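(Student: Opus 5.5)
The plan is to prove the closed form \eqref{prop1} first and then read off (a)--(c) from it, together with the known monotonicity of skew information.

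\textbf{Step 1: closed form.} For a projector $P=|b\rangle\langle b|\otimes\mathbf{1}^B$ we have $P^2=P$, so
\[
I(\rho^{AB},P)=\mathbf{tr}(\rho^{AB}P)-\mathbf{tr}\bigl(\sqrt{\rho^{AB}}P\sqrt{\rho^{AB}}P\bigr).
\]
Summing over $i$ and $t$, the first terms give $(d_A+1)\cdot 1$, because each basis is complete. For the second terms we use the defining property of a complete set of MUBs, namely that the $t$-sum of $|b_{ti}\rangle\langle b_{ti}|\otimes|b_{ti}\rangle\langle b_{ti}|$ equals $\mathbf{1}+F$, where $F$ is the swap on $H^A\otimes H^A$. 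This is the same identity that underlies \eqref{mubeqave}.

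Write $\sqrt{\rho^{AB}}=\sum_{uv}X_{uv}\otimes|u\rangle\langle v|$. Then
\[
\mathbf{tr}\bigl(\sqrt{\rho^{AB}}P\sqrt{\rho^{AB}}P\bigr)=\sum_{uv}\langle b|X_{uv}|b\rangle\langle b|X_{vu}|b\rangle .
\]
Summing over all MUB vectors, and using $\sum_{t,i}\langle b|X|b\rangle\langle b|Y|b\rangle=\mathbf{tr}X\,\mathbf{tr}Y+\mathbf{tr}(XY)$, gives
\[
\sum_{uv}\bigl[\mathbf{tr}X_{uv}\,\mathbf{tr}X_{vu}+\mathbf{tr}(X_{uv}X_{vu})\bigr]=\mathbf{tr}_B\bigl(\mathbf{tr}_A\sqrt{\rho^{AB}}\bigr)^2+\mathbf{tr}\rho^{AB}.
\]
The second piece equals $1$. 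Hence the global average is $\bigl(d_A-\mathbf{tr}_B(\mathbf{tr}_A\sqrt{\rho^{AB}})^2\bigr)/(d_A+1)$. Subtracting the local average, which by \eqref{mubeqave} is $\bigl(d_A-(\mathbf{tr}\sqrt{\rho^A})^2\bigr)/(d_A+1)$, yields \eqref{prop1}.

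\textbf{Step 2: properties.}

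(c) Local unitary invariance follows from the closed form. We have $\sqrt{U\rho U^\dagger}=U\sqrt\rho U^\dagger$. The $U_A$ factor disappears under $\mathbf{tr}_A$, so $\mathbf{tr}_A\sqrt{\rho^{AB}}$ is conjugated by $U_B$, which leaves $\mathbf{tr}_B(\cdot)^2$ unchanged. Likewise $\mathbf{tr}\sqrt{\rho^A}$ is unitarily invariant.

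(a) Non-negativity is termwise. The inequality $I(\rho^{AB},|i\rangle\langle i|\otimes\mathbf{1}^B)\geq I(\rho^A,|i\rangle\langle i|)$ quoted earlier is the monotonicity of skew information under partial trace, and summing it over $i$ and $t$ gives $Q_{mub}\ge0$. For a product state, $\sqrt{\rho^A\otimes\rho^B}=\sqrt{\rho^A}\otimes\sqrt{\rho^B}$, so
\[
\mathbf{tr}_A\sqrt{\rho^{AB}}=(\mathbf{tr}\sqrt{\rho^A})\sqrt{\rho^B},
\]
and $\mathbf{tr}_B(\cdot)^2=(\mathbf{tr}\sqrt{\rho^A})^2$. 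The two terms in \eqref{prop1} then cancel, giving $0$.

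(b) For contractivity, the local term $C(\rho^A|\Pi_t)$ is unchanged by $\mathcal I_A\otimes\Phi$. It therefore suffices that each $I(\rho^{AB},|b\rangle\langle b|\otimes\mathbf{1}^B)$ does not increase. I would use the Stinespring dilation $\Phi(\sigma)=\mathbf{tr}_E\bigl(V\sigma V^\dagger\bigr)$. Conjugating by the isometry $\mathbf{1}\otimes V$ preserves the skew information, since the square root commutes with isometric embedding and the observable $|b\rangle\langle b|\otimes\mathbf{1}$ is mapped compatibly. The partial trace over $E$ then cannot increase $I(\cdot,K\otimes\mathbf{1}_E)$; this is the same monotonicity as in (a), a consequence of Lieb concavity. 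Summing over $i$ and $t$ gives (b).

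\textbf{Main obstacle.} The delicate point is the partial-trace monotonicity used in (a) and (b). It should be cited from the Wigner--Yanase--Lieb concavity, in the form that $\mathbf{tr}\bigl(\sqrt\rho K\sqrt\rho K\bigr)$ is concave and the reduction is an average of local unitary conjugations, rather than re-derived. The isometry step in (b) also needs a little care.
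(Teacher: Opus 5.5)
Your proposal is correct and follows the paper's proof. The closed form comes from the same identity $\sum_{t,i}|b_{ti}\rangle\langle b_{ti}|\otimes|b_{ti}\rangle\langle b_{ti}|=\mathbf{1}+F$ (which follows from completeness of the MUB set rather than being its definition), (a) uses the same termwise monotonicity, and (b) uses the same termwise contractivity of the global term with the local term unchanged. The only differences are that you read local unitary invariance off the closed form via $\sqrt{U\rho U^\dagger}=U\sqrt{\rho}\,U^\dagger$ instead of rotating the MUB set by $U_A^\dagger$, and you sketch the Stinespring/Lieb-concavity argument for the contractivity that the paper only cites.
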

\textit{Proof} (a) Follow the fact that $I(\rho^{AB},|i\rangle\langle i|\otimes \mathbf{1}^{B}) \geqslant I(\rho^A,|i\rangle\langle i|)$ and the equality holds if $\rho^{AB}=\rho^A\otimes \rho^B$.

(b) For any quantum channel $\Phi$ acting on party $B$,
\begin{align*}
&Q\left((\mathcal{I}_A\otimes \Phi) (\rho^{AB})|\Pi_t\right)\\
&=C\left((\mathcal{I}_A\otimes \Phi) (\rho^{AB})|\Pi_t\otimes \mathbb{I}_{B}\right)\\
&\quad-C\left((\mathcal{I}_A\otimes \Phi) (\rho^{A}\otimes \rho^{B})|\Pi_t\otimes \mathbb{I}_{B}\right)\\
&=C\left((\mathcal{I}_A\otimes \Phi) (\rho^{AB})|\Pi_t\otimes \mathbb{I}_{B}\right)-C\left(\rho^{A}|\Pi_t\right)\\
&\leqslant  C\left(\rho^{AB}|\Pi_t\otimes \mathbb{I}_{B}\right)-C\left(\rho^{A}|\Pi_t\right),
\end{align*}
where the inequality is due to the contractivity of $C\left(\rho^{AB}|\Pi_t\otimes \mathbb{I}_{B}\right)$, $i.e.$ $C\left((\mathcal{I}_A\otimes \Phi) (\rho^{AB})|\Pi_t\otimes \mathbb{I}_{B}\right)\leqslant C\left(\rho^{AB}|\Pi_t\otimes \mathbb{I}_{B}\right)$\cite{sun2017quantum}. Complete the proof by the addition of all inequalities for $\Pi_t$.

(c) 
Denote $|\tilde{b}_{t i}\ra=U_A^\dagger | b_{t i}\rangle$, the new bases $\tilde{\Pi}_t=\left\{| \tilde{b}_{t i}\rangle: j=1,2, \cdots, d_A\right\}, t=1,2, \cdots, d_A+1$ are still a complete set of MUBs for any unitary operator $U_A$. In order to prove (\ref{prop1}), we need the following identities \cite{fan2023average}:
$$
\sum_{t=1}^{d_A+1}\sum_{i=1}^{d_A} |\tilde{b}_{t i}\ra \la \tilde{b}_{t i}|=(d_A+1)\mathbf{1}^{A}
$$
and
$$\sum_{t=1}^{d_A+1}\sum_{i=1}^{d_A} |\tilde{b}_{t i}\ra \la \tilde{b}_{t i}|\otimes |\tilde{b}_{t i}\ra \la \tilde{b}_{t i}|=\mathbf{1}^{A}\otimes \mathbf{1}^{A}+F,$$
where $\mathbf{1}^{A}$ is the identity matrix on $H^A$, $F=\sum_{kl}|k\ra\la l|\otimes |l\ra\la k|$ is the swap matrix and $\mathbf{tr}F(\rho\otimes \sigma)=\mathbf{tr}(\rho\sigma)$.

Let $\sqrt{\rho^{AB}}=\sum_{uv}X_{uv}\otimes |u\ra \la v|$ with $X_{uv}=\mathbf{tr}_B(\sqrt{\rho^{AB}}\mathbf{1}^{A}\otimes |v\ra\la u|)$. For any two local unitary operators $U_A$ and $U_B$, 
\begin{align*}
&\sum_{t=1}^{d_A+1}\sum_{i=1}^{d_A} I\left(U_A\otimes U_B\rho^{AB}(U_A\otimes U_B)^\dagger, | b_{t i}\rangle \langle b_{t i}|\otimes \mathbf{1}^{B}\right)\\
&=\sum_{t=1}^{d_A+1}\sum_{i=1}^{d_A} \mathbf{tr} \left(U_A\otimes U_B\rho^{AB}(U_A\otimes U_B)^\dagger | b_{t i}\rangle \langle b_{t i}|\otimes \mathbf{1}^{B} \right)\\
&-\sum_{t=1}^{d_A+1}\sum_{i=1}^{d_A} \mathbf{tr}\left(U_A\otimes U_B\sqrt{\rho^{AB}}(U_A\otimes U_B)^\dagger | b_{t i}\rangle \langle b_{t i}| \otimes \mathbf{1}^{B} \right)^2\\
&=d_A+1-\\
&\ \ \ \ \sum_{uv}\sum_{t=1}^{d_A+1}\sum_{i=1}^{d_A}\mathbf{tr}\left(U_A X_{uv} U_A^\dagger | b_{t i}\rangle \langle b_{t i}| U_A X_{vu} U_A^\dagger | b_{t i}\rangle \langle b_{t i}|\right)\\
&=d_A+1-\sum_{uv}\sum_{t=1}^{d_A+1}\sum_{i=1}^{d_A}\mathbf{tr}\left(X_{uv}|\tilde{b}_{t i}\ra \la \tilde{b}_{t i}| X_{vu}|\tilde{b}_{t i}\ra \la \tilde{b}_{t i}| \right)\\
&=d_A+1-\sum_{uv}\sum_{t=1}^{d_A+1}\sum_{i=1}^{d_A}\mathbf{tr}\left(|\tilde{b}_{t i}\ra \la \tilde{b}_{t i}|\otimes |\tilde{b}_{t i}\ra \la \tilde{b}_{t i}| X_{uv}\otimes X_{vu} \right)\\
&=d_A+1-\sum_{uv}\mathbf{tr} \left({(\mathbf{1}^{A}\otimes \mathbf{1}^{A}+F)}\left(X_{uv}\otimes X_{vu}\right)\right)\\
&=d_A+1-\sum_{uv}\mathbf{tr}\left(X_{uv}\right)\mathbf{tr}\left(X_{vu}\right)-\sum_{uv}\mathbf{tr}\left(X_{uv}X_{vu}\right)\\
&=d_A+1-\sum_{uv}\mathbf{tr}\left(X_{uv}\right)\mathbf{tr}\left(X_{vu}\right)-1\\
&=d_A-\sum_{uv}|\mathbf{tr}\left(X_{uv}\right)|^2\\
&=d_A-\mathbf{tr}_B\left(\mathbf{tr}_A \sqrt{\rho^{AB}}\right)^2
\end{align*}
and 
\begin{align*}
 &\sum_{t=1}^{d_A+1}\sum_{i=1}^{d_A} I\left(U_A\rho^{A}U_A^\dagger,| b_{t i}\rangle \langle b_{t i}| \right)\\
&=\sum_{t=1}^{d_A+1}\sum_{i=1}^{d_A} \mathbf{tr} \left(U_A\rho^{A}U_A^\dagger | b_{t i}\rangle \langle b_{t i}|\right)\\
&\quad-\sum_{t=1}^{d_A+1}\sum_{i=1}^{d_A} \mathbf{tr}\left(U_A\sqrt{\rho^{A}}U_A^\dagger | b_{t i}\rangle \langle b_{t i}| \right)^2\\
&=d_A+1-\sum_{t=1}^{d_A+1}\sum_{i=1}^{d_A}  |\la \tilde{b}_{t i}| \sqrt{\rho^{A}} |\tilde{b}_{t i}\ra|^2\\
&=d_A+1-\sum_{t=1}^{d_A+1}\sum_{i=1}^{d_A}\mathbf{tr} \left(|\tilde{b}_{t i}\ra \la \tilde{b}_{t i}|\otimes |\tilde{b}_{t i}\ra \la \tilde{b}_{t i}| \sqrt{\rho^{A}}\otimes \sqrt{\rho^{A}}\right)\\
&=d_A+1-\mathbf{tr} \left({(\mathbf{1}^{A}\otimes \mathbf{1}^{A}+F)}\left(\sqrt{\rho^{A}}\otimes \sqrt{\rho^{A}}\right)\right)\\
&=d_A-\left(\mathbf{tr}\sqrt{\rho^{A}}\right)^2.
\end{align*}
Combine the above two formulas, we obtain
$$
 Q_{{mub}}\left(\rho^{AB}\right)= \frac{\left(\mathbf{tr}\sqrt{\rho^{A}}\right)^2-\mathbf{tr}_B\left(\mathbf{tr}_A \sqrt{\rho^{AB}}\right)^2}{d_A+1}. 
$$ 
$\hfill\Box$
 
 Another way to define average correlation is averaging over all orthogonal bases
$$
Q_{\mathcal{U}}(\rho^{AB})=\int_{\mathcal{U}} Q\left(\rho^{AB} | U \Pi U^{\dagger}\right) d U.
$$
\begin{proposition}
For any state $\rho^{AB}$ in Hilbert space $H^A \otimes H^B$, the average correlation based on all orthogonal bases is
\begin{equation}\label{prop2}
Q_{\mathcal{U}}(\rho)= \frac{\left(\mathbf{tr}\sqrt{\rho^{A}}\right)^2-\mathbf{tr}_B\left(\mathbf{tr}_A \sqrt{\rho^{AB}}\right)^2}{d_A+1}.
\end{equation}
\end{proposition}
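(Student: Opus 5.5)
We mirror the proof of Proposition 1(c), replacing the MUB sum by the Haar integral. The key tool is the second-moment (Schur--Weyl) identity for the Haar measure: for any unit vector $|i\rangle$,
\begin{equation*}
\int_{\mathcal{U}} \left(U|i\rangle\langle i|U^\dagger\right)^{\otimes 2}\, dU=\frac{\mathbf{1}^{A}\otimes\mathbf{1}^{A}+F}{d_A(d_A+1)},
\end{equation*}
which follows because the left side commutes with all $V\otimes V$, is supported on the symmetric subspace, and has trace one. Summing over $i=1,\dots,d_A$ gives
\begin{equation*}
\int_{\mathcal{U}}\sum_{i=1}^{d_A}\left(U|i\rangle\langle i|U^\dagger\right)^{\otimes 2}dU=\frac{\mathbf{1}^{A}\otimes\mathbf{1}^{A}+F}{d_A+1}.
\end{equation*}
This is exactly $\frac{1}{d_A+1}$ times the MUB identity used in Proposition 1. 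Note that $\frac{1}{d_A+1}$ is precisely the normalization in $Q_{mub}$. Together with $\sum_i U|i\rangle\langle i|U^\dagger=\mathbf{1}^{A}$, this shows that the Haar average reproduces the normalized MUB average term by term.

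First, write $I(\rho,O)=\mathbf{tr}(\rho O^2)-\mathbf{tr}(\sqrt{\rho}O\sqrt{\rho}O)$. For the global part, set $O=U|i\rangle\langle i|U^\dagger\otimes\mathbf{1}^B$ and sum over $i$; the first term gives $1$. For the second term, expand $\sqrt{\rho^{AB}}=\sum_{uv}X_{uv}\otimes|u\rangle\langle v|$ as in Proposition 1. Then
\begin{equation*}
\mathbf{tr}\left(\sqrt{\rho^{AB}}O\sqrt{\rho^{AB}}O\right)=\sum_{uv}\mathbf{tr}\left[\left(U|i\rangle\langle i|U^\dagger\right)^{\otimes 2}\left(X_{uv}\otimes X_{vu}\right)\right].
\end{equation*}
Next, interchange the integral with the finite sums and apply the identity above. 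This yields
\begin{equation*}
\int_{\mathcal{U}} C\left(\rho^{AB}|U\Pi U^\dagger\otimes\mathbb{I}_B\right)dU=1-\frac{1}{d_A+1}\left[\sum_{uv}\mathbf{tr}X_{uv}\,\mathbf{tr}X_{vu}+\sum_{uv}\mathbf{tr}(X_{uv}X_{vu})\right].
\end{equation*}
Now use $\sum_{uv}\mathbf{tr}(X_{uv}X_{vu})=\mathbf{tr}\rho^{AB}=1$ and $\sum_{uv}|\mathbf{tr}X_{uv}|^2=\mathbf{tr}_B(\mathbf{tr}_A\sqrt{\rho^{AB}})^2$. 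Here we need $\mathbf{tr}X_{vu}=\overline{\mathbf{tr}X_{uv}}$, which holds by Hermiticity of $\sqrt{\rho^{AB}}$. The global part therefore becomes $\frac{d_A-\mathbf{tr}_B(\mathbf{tr}_A\sqrt{\rho^{AB}})^2}{d_A+1}$.

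For the local part, either repeat the same computation with $\sqrt{\rho^A}\otimes\sqrt{\rho^A}$, or invoke Eq.~(\ref{mubeqave}) directly. Either way, $C_{\mathcal{U}}(\rho^A)=\frac{d_A-(\mathbf{tr}\sqrt{\rho^A})^2}{d_A+1}$. Since $Q(\rho^{AB}|U\Pi U^\dagger)$ is the global part minus the local part, linearity of the integral lets us subtract the two results, which gives Eq.~(\ref{prop2}).

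The main obstacle is justifying the Haar second-moment formula cleanly. The plan is to cite it as standard: it is the same fact that underlies Eq.~(\ref{mubeqave}) in Refs.~\cite{LUO20192869,fan2023average}. We could also derive it by noting that $\int (U|\psi\rangle\langle\psi|U^\dagger)^{\otimes2}dU$ is $U\otimes U$-invariant, so by Schur's lemma it equals $\alpha\mathbf{1}+\beta F$. Two conditions then fix the coefficients. First, it is supported on the symmetric subspace, so multiplying by $F$ leaves it unchanged, which forces $\alpha=\beta$. Second, its trace is $1$, so $\alpha(d_A^2+d_A)=1$. The remaining steps are the same bookkeeping as in Proposition 1(c).
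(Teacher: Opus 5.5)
Your proof is correct and follows essentially the same route as the paper. Both arguments expand $\sqrt{\rho^{AB}}=\sum_{uv}X_{uv}\otimes|u\rangle\langle v|$, evaluate the Haar-averaged global term with a second-moment identity to get $\frac{d_A-\mathbf{tr}_B(\mathbf{tr}_A\sqrt{\rho^{AB}})^2}{d_A+1}$, and take the local term from Eq.~(\ref{mubeqave}). The difference is only in packaging: the paper cites the operator-valued formula for $\int_{\mathcal{U}}U^\dagger AUXU^\dagger BU\,dU$, whereas you use the equivalent projector form $\int_{\mathcal{U}}(U|i\rangle\langle i|U^\dagger)^{\otimes 2}dU=(\mathbf{1}^{A}\otimes\mathbf{1}^{A}+F)/(d_A(d_A+1))$ and derive it from Schur's lemma, which makes the term-by-term match with Proposition~1(c) explicit and keeps the correct sign $\mathbf{tr}X_{uv}\,\mathbf{tr}X_{vu}+\mathbf{tr}(X_{uv}X_{vu})$ (the paper's intermediate lines show a minus there, though its final answer is right).
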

\textit{Proof} Let $\sqrt{\rho^{AB}}=\sum_{uv}X_{uv}\otimes |u\ra \la v|$. Following the relation \cite{zhang2024matrixintegralsunitarygroups}
\begin{equation}\label{zhanglinunitary}
\begin{aligned} 
\int_{\mathcal{U}} {U}^{\dagger} {A} {U} {X} {U}^{\dagger} {B} {U} dU =& \frac{d \mathbf{tr}({A} {B})-\mathbf{tr}({A}) \mathbf{tr}({B})}{d\left(d^2-1\right)} \mathbf{tr}({X}) \mathbf{1}\\
&+\frac{d \mathbf{tr}({A}) \mathbf{tr}({B})-\mathbf{tr}({A} {B})}{d\left(d^2-1\right)} {X},
\end{aligned}
\end{equation}
where $A$, $B$ and $X$ are operators with $d$ dimension, $\mathbf{1}$ is the identity matrix. 
One obtains
\begin{align*}
&\int_{\mathcal{U}}\sum_i I(\rho^{AB},U|i\ra\la i|U^\dagger\otimes \mathbf{1}^{B})d U\\
&=\sum_i \int_{\mathcal{U}}\mathbf{tr}(\rho^{AB}U|i\ra\la i|U^\dagger\otimes \mathbf{1}^{B})\\
&\quad-\mathbf{tr}(\sqrt{\rho^{AB}}U|i\ra\la i|U^\dagger\otimes \mathbf{1}^{B})^2dU\\
&=1-\sum_i\sum_{uv}\mathbf{tr}(|i\ra\la i|\int_{\mathcal{U}}U^\dagger X_{vu} U|i\ra\la i|U^\dagger X_{uv}UdU )\\
&=1-\sum_i\sum_{uv} \mathbf{tr}(|i\ra\la i|\frac{d_A\mathbf{tr}(X_{uv}X_{vu})-\mathbf{tr}(X_{uv})\mathbf{tr}(X_{vu})}{d_A (d_A^2-1)})\\
&\quad-\sum_i\sum_{uv} \mathbf{tr}(|i\ra\la i|\frac{d_A\mathbf{tr}(X_{uv})\mathbf{tr}(X_{vu})-\mathbf{tr}(X_{uv}X_{vu})}{d_A (d_A^2-1)})\\
&=1-\sum_{uv}\frac{\mathbf{tr}(X_{uv})\mathbf{tr}(X_{vu})-\mathbf{tr}(X_{uv}X_{vu})}{d_A+1}\\
&=1-\frac{\sum_{uv}|\mathbf{tr}\left(X_{uv}\right)|^2-1}{d_A+1}\\
&=\frac{d_A-\sum_{uv}|\mathbf{tr}\left(X_{uv}\right)|^2}{d_A+1}\\
&=\frac{d_A-\mathbf{tr}_B\left(\mathbf{tr}_A \sqrt{\rho^{AB}}\right)^2}{d_A+1}.
\end{align*}
Complete the proof by combing with the equality (\ref{mubeqave}).$\hfill\Box$

In Ref.~\cite{PhysRevA.85.032117}, Luo et al. proposed a well-defined measure of correlation based on local observables, 
\begin{equation}\label{luocorr2012}
Q_{ob}(\rho^{AB})=\frac{1}{d_A+1}\sum_{i=1}^{d_A^2} I(\rho^{AB},G_i\otimes \mathbf{1}^{B}) - I(\rho^A,G_i),
\end{equation}
where $\{G_i\}_{i=1}^{d_A^2}$ is an orthonormal base for the real Hilbert space $\mathcal{L}
(H^A)$. Note that the quantum correlation (\ref{luocorr2012}) is basis-independent. Without loss of generality, let $\{|i\rangle\}$ be an orthonormal base for $H^A$, $\mathbf{i}$ be imaginary unit, and
$$
\begin{aligned}
S_{i j} & :=\frac{1}{\sqrt{2}}(|i\rangle\langle j|+|j\rangle\langle i|), \\
T_{i j} & :=\frac{\mathbf{i}}{\sqrt{2}}(|i\rangle\langle j|-|j\rangle\langle i|),
\end{aligned}
$$
then
$$
\{|i\rangle\langle i|\} \cup\left\{S_{i j}: i<j\right\} \cup\left\{T_{i j}: i<j\right\}
$$
constitutes an orthonormal base $\{G_i\}_{i=1}^{d_A^2}$ for $\mathcal{L}\left(H^A\right)$. Employing the relation (\ref{zhanglinunitary}) and the facts $\sum_i\mathbf{tr}(G_i)^2=d_A$ and $\sum_i\mathbf{tr}(G_i^2)=d_A^2$, for $\sqrt{\rho^{AB}}=\sum_{uv}X_{uv}\otimes |u\ra \la v|$, one obtains
\begin{align*}
&\int_{\mathcal{U}}\sum_i I(\rho^{AB},UG_iU^\dagger\otimes \mathbf{1}^{B})d U\\
&=\sum_i \int_{\mathcal{U}}\mathbf{tr}(\rho^{AB}UG_i^2U^\dagger\otimes \mathbf{1}^{B})\\
&\quad-\mathbf{tr}(\sqrt{\rho^{AB}}UG_iU^\dagger\otimes \mathbf{1}^{B})^2dU\\
&=d_A-\sum_i\sum_{uv}\mathbf{tr}(G_i\int_{\mathcal{U}}U^\dagger X_{vu} UG_iU^\dagger X_{uv}UdU )\\
&=d_A-\sum_i\sum_{uv} \frac{d_A\mathbf{tr}(X_{uv}X_{vu})-\mathbf{tr}(X_{uv})\mathbf{tr}(X_{vu})}{d_A (d_A^2-1)}(\mathbf{tr}(G_i))^2\\
&\quad -\sum_i\sum_{uv} \frac{d_A\mathbf{tr}(X_{uv})\mathbf{tr}(X_{vu})-\mathbf{tr}(X_{uv}X_{vu})}{d_A (d_A^2-1)}\mathbf{tr}(G_i^2)\\
&=d_A-\sum_{uv} \frac{d_A\mathbf{tr}(X_{uv}X_{vu})-\mathbf{tr}(X_{uv})\mathbf{tr}(X_{vu})}{(d_A^2-1)}\\
&\quad-\sum_{uv} \frac{d_A^2\mathbf{tr}(X_{uv})\mathbf{tr}(X_{vu})-d_A\mathbf{tr}(X_{uv}X_{vu})}{(d_A^2-1)}\\
&=d_A-\mathbf{tr}_B\left(\mathbf{tr}_A \sqrt{\rho^{AB}}\right)^2.
\end{align*}
The local skew information 
$$\sum_{i=1}^{d_A^2} I(\rho^A,G_i)=d_A-\left(\mathbf{tr}\sqrt{\rho^A}\right)^2.$$
Thus
\begin{equation}\label{corrob}
Q_{ob}\left(\rho^{AB}\right)=\frac{\left(\mathbf{tr}\sqrt{\rho^{A}}\right)^2-\mathbf{tr}_B\left(\mathbf{tr}_A \sqrt{\rho^{AB}}\right)^2}{d_A+1}.
\end{equation}

By inspecting the equalities (\ref{prop1}), (\ref{prop2}) and (\ref{corrob}) we have the following result.
\begin{proposition}\label{prop3}
For any state $\rho^{AB}$ in Hilbert space $H^A \otimes H^B$, the average correlation based on coherence is equivalent to the correlation based on orthonormal base, $i.e.$,
\begin{equation}
Q_{{mub}}\left(\rho^{AB}\right)=Q_{\mathcal{U}}(\rho^{AB})=Q_{ob}\left(\rho^{AB}\right).
\end{equation}
\end{proposition}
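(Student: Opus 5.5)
The plan is to prove Proposition \ref{prop3} by comparing closed forms. Each of the three quantities $Q_{mub}$, $Q_{\mathcal{U}}$ and $Q_{ob}$ has already been reduced to an explicit function of $\rho^{AB}$ that does not depend on any choice of basis, so no further averaging is needed. Concretely:
\begin{itemize}
\item Proposition 1(c), equation (\ref{prop1}), gives $Q_{mub}(\rho^{AB})$ as $\big[(\mathbf{tr}\sqrt{\rho^A})^2-\mathbf{tr}_B(\mathbf{tr}_A\sqrt{\rho^{AB}})^2\big]/(d_A+1)$.
\item Proposition 2, equation (\ref{prop2}), gives the same expression for $Q_{\mathcal{U}}(\rho^{AB})$.
\item Equation (\ref{corrob}) gives the same expression for $Q_{ob}(\rho^{AB})$.
\end{itemize}
The argument therefore consists of checking that these three right-hand sides are literally the same expression, and that each derivation is valid for every state $\rho^{AB}$.

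First I would recall how the global terms were handled. In all three derivations one writes $\sqrt{\rho^{AB}}=\sum_{uv}X_{uv}\otimes|u\ra\la v|$. The global skew-information sum then becomes
$$d_A-\sum_{uv}|\mathbf{tr}X_{uv}|^2 = d_A-\mathbf{tr}_B(\mathbf{tr}_A\sqrt{\rho^{AB}})^2.$$
This uses two facts. The first is $\sum_{uv}\mathbf{tr}(X_{uv}X_{vu})=\mathbf{tr}\rho^{AB}=1$. The second is that $\mathbf{tr}X_{uv}$ are the matrix entries of the Hermitian operator $\mathbf{tr}_A\sqrt{\rho^{AB}}$, so $\sum_{uv}|\mathbf{tr}X_{uv}|^2=\mathbf{tr}_B(\mathbf{tr}_A\sqrt{\rho^{AB}})^2$.

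The normalisations differ, and I would check that the factor $1/(d_A+1)$ comes out the same each time:
\begin{itemize}
\item For $Q_{mub}$, the factor comes from averaging over the $d_A+1$ bases.
\item For $Q_{\mathcal{U}}$, it comes from the Haar integral (\ref{zhanglinunitary}), together with the local term from (\ref{mubeqave}), which gives $(d_A-(\mathbf{tr}\sqrt{\rho^A})^2)/(d_A+1)$.
\item For $Q_{ob}$, it is the prefactor in (\ref{luocorr2012}). The global sum there is $d_A-\mathbf{tr}_B(\cdot)^2$, with no denominator, because it uses $\sum_i(\mathbf{tr}G_i)^2=d_A$ and $\sum_i\mathbf{tr}G_i^2=d_A^2$.
\end{itemize}
Subtracting the local term from the global term in each case gives the common numerator $(\mathbf{tr}\sqrt{\rho^A})^2-\mathbf{tr}_B(\mathbf{tr}_A\sqrt{\rho^{AB}})^2$, and the chain of equalities follows.

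The main obstacle is the scope of $Q_{mub}$. It is defined only when a complete set of $d_A+1$ MUBs exists, for example when $d_A$ is a prime power. The first equality should therefore be stated for such $d_A$, while $Q_{\mathcal{U}}=Q_{ob}$ holds in every dimension.

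A secondary point concerns the $Q_{ob}$ computation, which averages over $U$ before summing over $i$. This is legitimate because $\sum_i I(\rho^{AB},G_i\otimes\mathbf{1}^B)$ does not depend on the orthonormal base $\{G_i\}$ (Luo et al.), and $\{UG_iU^\dagger\}$ is again an orthonormal base, so the Haar average equals the unaveraged sum. I would state this explicitly to close the argument.
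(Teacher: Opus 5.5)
Your proposal is correct and follows the paper's own argument: the paper proves Proposition~\ref{prop3} simply by observing that the closed forms (\ref{prop1}), (\ref{prop2}) and (\ref{corrob}) are the same expression. Your two caveats are valid refinements that the paper leaves implicit. First, the equality involving $Q_{mub}$ only makes sense when a complete set of $d_A+1$ MUBs exists (e.g.\ $d_A$ a prime power). Second, the Haar average in the $Q_{ob}$ computation is harmless, because $\sum_i I(\rho^{AB},G_i\otimes\mathbf{1}^B)$ does not depend on the choice of orthonormal operator basis $\{G_i\}$.
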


This equivalence may be explained from a group-theoretic and invariant-measure perspective. The existence of a complete set of MUBs implies that such a set provides a discrete uniform tiling of the unitary group under the Haar measure. As a result, averaging correlation over the complete MUB ensemble, integrating over the entire unitary group, or summing over any complete orthonormal operator basis ultimately yields the same value, since all three procedures correspond to the same invariant integration over the space of measurement bases under appropriate normalization. This demonstrates that the concept of average correlation is well-defined in a basis-independent manner, reflecting a deeper structural symmetry in the space of quantum states.

Actually, from the perspective of quantum channels, the average correlation can also be interpreted as the correlation relative to certain quantum channels.  The quanutm coherence based on the depolarizing channel $\mathcal{E}_{\mathrm{De}}$ can be expressed as \cite{PhysRevA.105.032436,fan2023average}:
$$
C(\rho^{A} |\mathcal{E}_{\mathrm{De}})=\sum_{i=1}^{d_A^2} I\left(\rho^A, \frac{X_i}{\sqrt{d_A+1}}\right)+I(\rho^A,\frac{\mathbf{1}^A}{\sqrt{d_A+1}}),
$$
where 
$$
\left\{\frac{1}{\sqrt{d_A+1}}\mathbf{1}^A, \frac{1}{\sqrt{d_A+1}}X_i: i=1,2,\cdots,d_A^2\right\}
$$
is the set of Kraus operators of $\mathcal{E}_{\mathrm{De}}$ and $\{X_i: i=1,2,\cdots,d_A^2\}$ is a complete orthonormal operator basis of $\mathcal{L}(H^A)$. In this sense, it is natrual to explain the averaged correlations as the correlations relative to the quantum channel $\mathcal{E}_{\mathrm{De}}$ as defined in Ref.~\cite{PhysRevA.105.032436}:
$$
\begin{aligned}
Q(\rho^{AB} |\mathcal{E}_{\mathrm{De}})&=C(\rho^{AB} |\mathcal{E}_{\mathrm{De}}\otimes \mathcal{I}_{B})-C(\rho^{A} |\mathcal{E}_{\mathrm{De}}),
\end{aligned}
$$
where $\mathcal{I}_{B}$ is the identity channel on system $B$. More generally, the correlations relative to the twirling channel $\mathcal{T}_{\mathcal{U}}$ induced by the unitary group $\mathcal{U}(H^{A})$ are actually the correlations relative to the completely depolarizing channel $\mathcal{E}_{\mathrm{De}}$. The corresponding quantifier of correlations in $\rho^{A B}$ relative to $\mathcal{T}_{\mathcal{U}}$ is defined as \cite{PhysRevA.105.032436}

$$
\begin{aligned}
 Q\left(\rho^{A B}|\mathcal{T}_{\mathcal{U}}\right)&=\int_{\mathcal{U_A}}I\left(\rho^{AB}, U_A \otimes \mathbf{1}^B\right)-I\left(\rho^A, U_A\right) d U_A\\
 &=\frac{\left(\mathbf{tr}\sqrt{\rho^{A}}\right)^2-\mathbf{tr}_B\left(\mathbf{tr}_A \sqrt{\rho^{AB}}\right)^2}{d_A}.
\end{aligned}
$$

According to the Proposition \ref{prop3}, one has
$$
Q_{ob}\left(\rho^{AB}\right)=Q(\rho^{AB} |\mathcal{E}_{\mathrm{De}})= \frac{d_A}{d_A+1}Q\left(\rho^{A B}|\mathcal{T}_{\mathcal{U}}\right).
$$

\section{Complementarity between average correlation and wave-particle duality}\label{sec3.2}

The average correlation defined in the previous section can be employed to establish a novel complementarity relation that connects wave-particle duality and system-environment correlations. Let $\rho^{A E}=|\Psi\rangle\langle\Psi|$ be a bipartite pure state of $H^A \otimes H^E$ with the Schmidt decomposition $|\Psi\rangle= \sum_{\mu=1}^{d_A} \sqrt{\lambda_\mu}|\mu\rangle \otimes\left|b_\mu\right\rangle$, then
\begin{equation}\label{avecorrpure}
Q_{\mathcal{U}}(\rho^{AE})=\frac{\left(\sum_\mu \sqrt{\lambda_\mu}\right)^2-\sum_\mu \lambda_\mu^2}{d_A+1} .
\end{equation}

In Ref.~\cite{Fu_2022}, Fu and Luo illustrated the wave-particle duality by casting it into a form of information conservation in a multi-path interferometer, with uncertainty as a unified theme. The wave feature via state uncertainty has the following form in paths $\Pi=\{|i\ra\la i|,\ i=1,2,\cdots, d_A\}$,
\begin{equation}
W(\rho^A|\Pi)=\sum_{i \neq j}|\langle i| \rho^A| j\rangle |^ 2 .
\end{equation}
The particle feature is defined as
\begin{equation}
P(\rho^A|\Pi)=\sum_{i}|\langle i| \rho^A| i\rangle |^ 2 .
\end{equation}
Thus
\begin{equation}\label{wpd}
 W(\rho^A|\Pi)+P(\rho^A|\Pi)=\mathbf{tr}\left(\rho^A\right)^2.
\end{equation}

From the perspective of quantum complementarity, the following result can be directly derived by combining equations (\ref{avecorrpure}) and (\ref{wpd}).

\begin{proposition}\label{prop4}
Let $\rho^{A E}=|\Psi\rangle\langle\Psi|$ be a bipartite state shared between quantum system A and environment E, $i.e.$, $|\Psi\rangle\in H^A\otimes H^E$. The following complementarity relation holds:
\begin{equation}\label{wpc}
W(\rho^A|\Pi)+P(\rho^A|\Pi)+(d_A+1)Q_{\mathcal{U}}(\rho^{AE})=\left(\mathbf{tr}\sqrt{\rho^A}\right)^2,
\end{equation}
where $\Pi=\{|i\ra\la i|,\ i=1,2,\cdots, d_A\}$ denotes the set of path projectors.
\end{proposition}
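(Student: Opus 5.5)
The plan is to add equation (\ref{avecorrpure}) to the identity (\ref{wpd}), after rewriting both sides in terms of the Schmidt coefficients $\lambda_\mu$. Equation (\ref{avecorrpure}) follows from Proposition 2. We specialize the formula $Q_{\mathcal{U}}(\rho^{AE})=\big[(\mathbf{tr}\sqrt{\rho^A})^2-\mathbf{tr}_E(\mathbf{tr}_A\sqrt{\rho^{AE}})^2\big]/(d_A+1)$ to a pure state. Since $\rho^{AE}=|\Psi\rangle\langle\Psi|$ is a rank-one projector, $\sqrt{\rho^{AE}}=\rho^{AE}$. Hence $\mathbf{tr}_A\sqrt{\rho^{AE}}=\rho^E=\sum_\mu\lambda_\mu|b_\mu\rangle\langle b_\mu|$, and so $\mathbf{tr}_E(\rho^E)^2=\sum_\mu\lambda_\mu^2$. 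On the other side, $\rho^A=\sum_\mu\lambda_\mu|\mu\rangle\langle\mu|$, which gives $(\mathbf{tr}\sqrt{\rho^A})^2=(\sum_\mu\sqrt{\lambda_\mu})^2$. Substituting these gives (\ref{avecorrpure}).

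Next we treat the wave and particle parts. By (\ref{wpd}), $W(\rho^A|\Pi)+P(\rho^A|\Pi)=\mathbf{tr}(\rho^A)^2$, and this holds for any choice of path basis $\Pi$. This sum is basis independent even though $W$ and $P$ separately are not. In the Schmidt picture $\mathbf{tr}(\rho^A)^2=\sum_\mu\lambda_\mu^2$, which equals $\mathbf{tr}(\rho^E)^2$ because the two marginals of a pure state share their nonzero spectrum.

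Adding the two relations gives
\begin{equation*}
W(\rho^A|\Pi)+P(\rho^A|\Pi)+(d_A+1)Q_{\mathcal{U}}(\rho^{AE})=\sum_\mu\lambda_\mu^2+\Big(\sum_\mu\sqrt{\lambda_\mu}\Big)^2-\sum_\mu\lambda_\mu^2=\Big(\sum_\mu\sqrt{\lambda_\mu}\Big)^2 .
\end{equation*}
The last expression is $(\mathbf{tr}\sqrt{\rho^A})^2$, which is (\ref{wpc}).

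The only step needing care is the identification $\mathbf{tr}_E(\mathbf{tr}_A\sqrt{\rho^{AE}})^2=\mathbf{tr}(\rho^A)^2$. It rests on two facts: $\sqrt{P}=P$ for a projector $P$, and the marginals of a pure state are isospectral. We would state both explicitly. Everything else is direct substitution.
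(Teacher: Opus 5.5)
Your proof is correct and follows the same route as the paper, which obtains the relation by adding (\ref{avecorrpure}) to (\ref{wpd}). You also derive (\ref{avecorrpure}) from Proposition 2, using $\sqrt{\rho^{AE}}=\rho^{AE}$ and $\mathbf{tr}_A\rho^{AE}=\rho^E$ together with the isospectrality of the two marginals of a pure state; the paper states that equation without justification, so this fills in a step it leaves implicit.
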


Based on the principle of quantum complementarity, the relation presented in Proposition \ref{prop4} can be analyzed as follows: This complementary relationship provides a unified description of wave-particle duality and the measure of system-environment correlations. Equation (\ref{wpc}) shows that for any bipartite pure state $\rho^{A E}$, the sum of the wave feature $W\left(\rho^A | \Pi\right)$ and the particle feature $P\left(\rho^A |\Pi\right)$ of the local subsystem $A$ is not conserved. The missing part is precisely compensated by the correlation measure $Q_{\mathcal{U}}\left(\rho^{A E}\right)$ between the system and the environment. This structure reveals the deeper implication of quantum complementarity in a multipath interferometer: when the correlation between the system and the environment strengthens, the local subsystem exhibits less pronounced wave-particle behavior; conversely, when the correlation weakens, the wave-particle behavior becomes more dominant. In particular, when $Q_{\mathcal{U}}\left(\rho^{A E}\right)=0$, $i.e.$, the system and environment are uncorrelated, the relation reduces to the standard wave-particle duality identity given in Eq.~(\ref{wpd}), indicating that the wave and particle aspects of subsystem $A$ then form a complete complementary pair. This proposition not only offers a quantitative tool for understanding how quantum correlations modulate complementarity, but also establishes a direct link between wave-particle duality in multi-path interference and the theory of quantum entanglement.

\section{Conclusion}\label{sec4}
In this work, we have established a coherent framework for quantifying average coherence and average correlation using the Wigner-Yanase skew information. The equivalence between the mutually unbiased bases averaged and unitarily averaged formulations demonstrates that both quantities can be defined in a basis independent way, rooted in the invariant Haar measure on the unitary group. The proposed average correlation measure fulfills fundamental physical requirements such as non negativity, contractivity, and local unitary invariance, positioning it as a well behaved descriptor of quantum correlations beyond entanglement.

Furthermore, by relating the average correlation to wave-particle duality measures, we have obtained a new complementarity relation that unifies coherence, path information, and system environment correlation. This relation underscores the inherent trade-offs among different informational aspects of a quantum state and provides a fresh viewpoint on the role of correlation in quantum interference.

Future work may extend this approach to multipartite systems, explore operational interpretations of the average correlation in quantum protocols, and examine its dynamics in open quantum systems. These results reinforce that skew information based quantities offer a natural and mathematically rigorous framework for characterizing quantum coherence, correlations, and complementarity.

\bigskip
\noindent{\bf Acknowledgments}\, \,
This work is supported by the National Natural Science Foundation of China (NSFC) (Grant Nos.~12526648, 12401397), Natural Science Foundation of Hunan Province (Grant Nos.~2025JJ60025, 2026JJ60117), Scientific Research Project of the Education Department of Hunan Province (Grant No.~24B0298) and Changsha University of Science and Technology (Grant No.~097000303923).

\bigskip
\noindent{\bf Data Availability Declarations}\, \,
No datasets were generated or analyzed during the current study.

\bigskip
\noindent{\bf Conflict of interest}\, \,
The authors declare no Conflict of interest.

\bibliography{zhang}
\bibliographystyle{iopart-num}

\end{document}